\newtheorem{theorem}{Theorem}
\begin{document}

\title{Ultra-Reliable Short-Packet Communications with Wireless Energy Transfer}
\author{
	\IEEEauthorblockN{	Onel L. Alcaraz López, %
						Hirley Alves, 
						Richard Demo Souza, and
						Evelio Martín García Fernández
					}
					%
	\thanks{O.L.A. López, E.M.G. Fernández are with Federal University of Paraná (UFPR), Curitiba, Brazil. \{onel.luis,evelio@ufpr.br\}.} 
	\thanks{ R.D. Souza is with Federal University of Technology - Paraná (UTFPR), Brazil. \{richard@utfpr.edu.br\}.}
	\thanks{H. Alves is with the Centre for Wireless Communications (CWC), University of Oulu, Finland. \{hirley.alves@oulu.fi\}.}
	\thanks{This work was supported by CNPq, CAPES, Funda\c{c}\~ao Arauc\'aria (Brazil) and Academy of Finland, and the Program for Graduate Students from Cooperation Agreements (PEC-PG, of CAPES/CNPq Brazil).}
}					
					
\maketitle

\begin{abstract}
We analyze and optimize a wireless system with energy transfer in the downlink and information transfer in the uplink, under quasi-static Nakagami-m fading. We consider ultra-reliable communication scenarios representative of the fifth-generation of wireless systems, with strict error and latency requirements. The error probability and delay are investigated, and an approximation for the former is given and validated through simulations. The numerical results demonstrate that there are optimum numbers of channels uses for both energy and information transfer for a given message length.
\end{abstract}
\begin{IEEEkeywords}
	Ultra-Reliable communications, short packets, energy transfer, quasi-static Nakagami-m fading.
\end{IEEEkeywords}

\section{Introduction}
Wireless energy transfer (WET) is emerging as a potential technology for powering small and energy-efficient devices with low power requirements through radio frequency (RF) signals~\cite{Ulukus.2015}. This comes with the implicit advantage that RF signals can carry both energy and information~\cite{Varshney.2008}. This scenario becomes very attractive for future communication paradigms such as the Internet of Things, where powering a potentially massive number of devices will be a major challenge \cite{Zanella.2014}. 
 
As pointed out in \cite{Makki.2016}, the most important characteristics of WET systems are: \textit{i}) power consumption of the nodes is in the order of $\mu$W; \textit{ii}) strict requirements on the reliability of the energy supply and of the data transfer; \textit{iii}) information is conveyed in short packets. This third requirement is due to intrinsically small data payloads, low-latency requirements, and/or lack of energy resources to support longer transmissions \cite{Khan.2016}. Indeed, short packets are essential to support Ultra-Reliable Communication (URC) \cite{Popovski.2014}, which is a novel operation mode under discussion for the fifth-generation (5G) of wireless systems. URC over a Short Term (URC-S) focuses on how to deliver short packets under stringent latency ($\le 10$ms) and error probability (e.g., $10^{-5}$) requirements~\cite{Durisi.2015}.

Although performance metrics like Shannon capacity, and its extension to nonergodic channels, have been proven useful to design current wireless systems, they are no longer appropriate in a short-packet scenario \cite{Durisi.2015}. This is because such performance metrics are asymptotic in the packet length and fit well for systems without hard delay requirements~\cite{Devassy.2014}. The Shannon capacity implies that an arbitrarily low error probability can be achieved when sufficiently long packets are used, i.e., introducing sufficiently long delays. In the case of short packets, a more suitable metric is the maximum achievable rate  at a given block length and error probability. This metric has been characterized in \cite{Polyanskiy.2010,Yang.2014} for both Additive White Gaussian Noise (AWGN) and fading channels. In addition, the authors of \cite{Schiessl.2016,Hu.2016_2} incorporate the cost of acquiring instantaneous channel state information (CSI) within a transmission deadline, and analyze the impact of a target error probability for different scenarios under a finite blocklength regime.

WET systems with short packets have been recently investigated in the literature. In \cite{Tandon.2016} subblock energy-constrained codes are investigated, and a sufficient condition on the subblock length to avoid energy outage at the receiver is provided. In \cite{Khan.2016}, a node, charged by a power beacon, attempts to communicate with a receiver over a noisy channel. The system performance is investigated as a function of the number of channel uses for WET and for wireless information transfer (WIT).
An amplify-and-forward relaying scenario is analyzed in \cite{Haghifam.2016} and tight approximations for the outage probability/throughput are given.
Retransmission protocols, in both energy and information transmission phases, are implemented in \cite{Makki.2016} to reduce the outage probability compared to open-loop communications. Optimal power allocation and time sharing between the energy and information signals is proposed to minimize the energy-constrained outage probability. 

Differently from the above works, this paper aims at energy constrained URC-S scenarios with both error probability and latency constraints. 
The system is composed of a point-to-point communication link under Nakagami-m quasi-static fading, with WET in the downlink and WIT in the uplink. The main contributions are: 1) we derive a closed-form approximation for the WIT error probability, as a function of the amount of channel uses in the WET and WIT phases, and validate its accuracy through simulations; 2) we show  the existence of  optimum values for the amount of channel uses in the WIT and WET phases, and that by increasing the WIT blocklength the required WET blocklength for a given target reliability decreases. Our results also show that the more stringent the reliability requirement, the higher the required delay, increasing also the portion of time dedicated to the WET phase. Moreover, in general the possibility of meeting the reliability and latency constraints increases by decreasing the message length. 

Next, Section \ref{system} presents the system model and assumptions. Section \ref{FB} discusses the performance metrics and an approximation for the error probability. Section \ref{results} presents the numerical results. Finally, Section \ref{conclusions} concludes the paper.
\newline\textbf{Notation:} $X\sim\Gamma(m,1/m)$ is a normalized gamma distributed random variable with shape factor $m$ and Probability Density Function (PDF) $f_X(x)=\frac{m^m}{\Gamma(m)}x^{m-1}e^{-mx}$. Let $\mathbb{E}[\mathrel{\cdot}]$ denote expectation, while $\ _xF_y$ and $\operatorname{K}_t(\mathrel{\cdot})$ are the generalized hypergeometric function and the modified Bessel function of second kind and order $t$, respectively \cite{Jeffrey.2007}.
\section{System Model}\label{system}
Consider the scenario in Fig.~\ref{Fig1}, in which $S$ represents the information source while $D$ is the destination. The nodes are single antenna, half-duplex, devices. $D$ is assumed to be externally powered and acts as an interrogator, requesting information from $S$, which may be seen as a sensor node with very limited energy supply. First, $D$ charges $S$ during $v$ channel uses in the WET phase, through channel $h$. Then, $S$ uses the energy obtained in the WET phase to transmit $k$ information bits over $n$ channel uses in the WIT phase\footnote{Different (and possibly very distant) frequency bands for both processes are assumed, motivated by the different properties of the WET and WIT reception/transmission circuits as in \cite{Makki.2016,Chen.2015,Mandal.2008,Moritz.2014}.}. through channel $g$. The duration of a channel use is denoted by $T_c$.
Nakagami-m quasi-static channels are assumed, where the fading process is considered to be constant over the transmission of a block (either of $v$ or $n$ channel uses) and independent and identically distributed from block to block. We consider independent normalized channel gains, then $h^2\sim\Gamma(m,1/m)$ and $g^2\sim\Gamma(m,1/m)$. In addition, perfect CSI at $D$ is assumed in the decoding after the WIT phase\footnote{CSI acquisition in an energy-limited setup is not trivial and including the effect of imperfect CSI would demand a more elaborated mathematical analysis that is out of the scope of this work. However, notice that when channels remain constant over multiple transmission rounds, the cost of CSI acquisition can be negligible. Anyway, our analysis based on perfect CSI gives an upper-bound on the performance of real scenarios.}.
\begin{figure}[t!]
	\centering
	\subfigure{\includegraphics[width=0.8\textwidth]{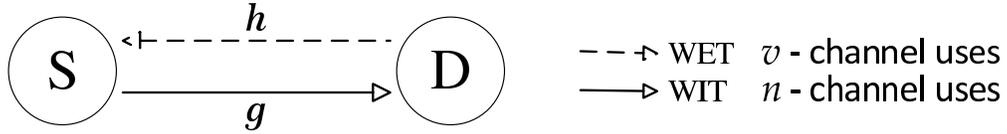}}
	\vspace*{-2mm}
	\caption{System model with WET in the downlink and WIT in the uplink.}		
	\label{Fig1}
	\vspace*{-2mm}
\end{figure}
\subsection{WET Phase}
In this phase, $D$ charges $S$ during $v$ channel uses. The energy harvested at $S$ is given by
\begin{equation}\label{Eh}
E=\frac{\eta P_{_D} h^2}{\kappa d^{\alpha}}vT_c,
\end{equation}
where $P_{_D}$ is the transmit power of $D$, $0<\eta<1$ is the energy conversion efficiency, $d$ is the distance between $S$ and $D$, $\alpha$ is the path loss exponent and $\kappa$ accounts for other factors as the carrier frequency, heights and gains of the antennas \cite{Goldsmith.2005}. In addition, we assume that $P_{_D}$ is sufficiently large such that the energy harvested from noise is negligible. 
\subsection{WIT Phase}
After the WET phase, $S$ uses the harvested energy to transmit a message of $k$ bits to $D$ over $n$ channel uses. The signal received at $D$ can be written as
\begin{equation}\label{yd}
y_{_D}=\sqrt{\frac{P_{_S}}{\kappa d^{\alpha}}}gx_{_S}+w_{_D},
\end{equation}
where $x_{_S}$ is the zero-mean, unit-variance Gaussian codebook transmitted by $S$, $\mathbb{E}[|x_{_S}|^2]=1$, $w_{_D}$ is the Gaussian noise vector at $D$ with variance $\sigma_{_D}^2$ and the transmit power is
\begin{equation}\label{Ps}
P_{_S}=\frac{E}{nT_c}=\frac{\eta vP_{_D}h^2}{n\kappa d^{\alpha}}.
\end{equation}
Thus, the instantaneous Signal-to-Noise Ratio (SNR) at $D$ is
\begin{equation}\label{SNR}
\gamma=\frac{\eta vP_{_D}h^2g^2}{n\kappa^2d^{2\alpha}\sigma_{_D}^2}=\frac{\eta vP_{_D}\tilde{h}\tilde{g}}{m^2n\kappa^2d^{2\alpha}\sigma_{_D}^2}=\mu z,
\end{equation}
where $\tilde{u}=m\cdot u^2$, $u\in\{h,g\}$, is distributed according to the standard gamma PDF, $\mu=\frac{\eta vP_{_D}}{m^2n\kappa^2d^{2\alpha}\sigma_{_D}^2}$ and $z=\tilde{h}\tilde{g}$, whose PDF is given
by \cite[Th. 2.1]{Withers.2013} 
\begin{align} 
	\label{fz}
	f_Z(z) &= \frac{2}{\Gamma(m)^2}z^{m-1} \operatorname{K}_0(2\sqrt{z}),\ z>0.
\end{align}
\section{Performance at Finite Blocklength}\label{FB}
We consider a time-constrained setup, which implies that $D$ has to decode the received signal block by block. The system performance is characterized in terms of error probability and delay when $S$ is transmitting at a fixed rate, $r = k/n$.
\subsection{Error Probability and Delay}
Let $\epsilon$ be the average error probability which, for quasi-static fading channels and sufficiently large values of n, e.g $n\ge100$, can be approximated as \cite[eq.(59)]{Yang.2014}
\begin{align}
\epsilon\approx\mathbb{E}\!\Biggl[Q\!\Biggl(\!\frac{C(\gamma)\!-\!r}{\sqrt{V(\gamma)/n}}\!\Biggl)\!\Biggl]\!=\!\int\limits_{0}^{\infty}\!Q\Biggl(\!\frac{C(\mu z)\!-\!r}{\sqrt{V(\mu z)/n}}\!\Biggl)\!f_Z(z)\mathrm{d}z, \label{EX}
\end{align}
where $C(\gamma)=\log_2(1+\gamma)$ is the Shannon capacity, $V(\gamma)=\left(1-\frac{1}{(1+\gamma)^2}\right)(\log_2e)^2$ is the channel dispersion, which measures the stochastic variability of the channel relative to a deterministic channel with the same capacity \cite{Polyanskiy.2010}, and $Q(x)=\int_{x}^{\infty}\frac{1}{\sqrt{2\pi}}e^{-t^2/2}\mathrm{d}t$. Let $\delta$ be the delay in delivering a message of $k$ bits, while $\delta^*$ is the minimum delay that satisfies a given reliability constraint. Moreover, $\nu$ is the time sharing parameter representing the fraction of $\delta$ devoted to WET only. Both metrics are given next 
\begin{align}
\delta&=n+v, \label{delta}\\
\nu&=v/\delta. \label{nu}
\end{align}
Notice that $\delta$ is measured in channel uses, while $\delta T_c$ would be the delay in seconds. Finally, we define the optimum WIT blocklength, in the sense of minimizing $\delta^*$, as $n^*$. Both $\delta^*$ and $n^*$ are numerically investigated in Section \ref{results}.
\subsection{Error probability approximation}
It seems intractable to find a closed-form solution for \eqref{EX}. Then, first we resort to an approximation of  $Q(p(\mu z))$, $p(\mu z)=\frac{C(\mu z)-r}{\sqrt{V(\mu z)/n}}$, given by \cite{Makki.2014,Makki.2016}
\begin{eqnarray}\label{AP}
Q(p(\mu z))\!\approx\!\Omega(\mu z)\!=\!\left\{\begin{array}{ll}
1,&  z\le \zeta^2\\
\frac{1}{2}\!-\!\frac{\beta}{\sqrt{2\pi}}(\mu z\!-\!\theta)\!,\!&\!\zeta^2\!<\!z\!<\!\xi^2\!\\
0,& z\ge \xi^2
\end{array}
\right.\!, 
\end{eqnarray}
where $\zeta^2 = \tfrac{\varrho}{\mu} $,  $\xi^2 = \tfrac{\vartheta}{\mu}$,  $\theta=2^{k/n}-1$, $\beta=\sqrt{\frac{n}{2\pi}}(2^{2k/n}-1)^{-\frac{1}{2}}$, $\varrho=\theta-\frac{1}{\beta}\sqrt{\frac{\pi}{2}}$ and $\vartheta=\theta+\frac{1}{\beta}\sqrt{\frac{\pi}{2}}$, which leads to the following result.
\begin{theorem}\label{prop_1}
	For the system described in Section \ref{system}, the error probability in \eqref{EX} can be approximated as 
	%
	\begin{align}\label{AP2} 
	\mathbb{E}[\epsilon]&\approx\zeta^{2m}\tfrac{(\omega_1-\omega_3)}{m}\bigg[\operatorname{K}_0(2\zeta)\ _1F_2(1;m,m\!+\!1;\zeta^2)+\tfrac{\zeta}{m}\operatorname{K}_1(2\zeta)\ _1F_2(1;m\!+\!1,m\!+\!1;\zeta^2)\bigg]\! \nonumber \\ &\quad+\!\zeta^{2m\!+\!2}\tfrac{\omega_2}{m\!+\!1}\bigg[\operatorname{K}_0(2\zeta)  _1F_2(1;m\!+\!1,m\!+\!2,\zeta^2)\!+\!\tfrac{\zeta}{m\!+\!1}\operatorname{K}_1(2\zeta)\ _1F_2(1;m\!+\!2,m\!+\!2,\zeta^2)\bigg] \nonumber \\
	&\quad+\!\xi^{2m}\operatorname{K}_0(2\xi)\bigg[\tfrac{\omega_3}{m}\ _1F_2(1;m,m\!+\!1;\xi^2) \!-\!\tfrac{\omega_2\xi^2}{m+1}\!\ \!_1F_2(1;\!m\!+\!1,\!m\!+\!2;\xi^2)\!\bigg] \nonumber \\ &\quad+\!\xi^{2m\!+\!1}\!\operatorname{K}_1(2\xi)\!\bigg[\tfrac{\omega_3}{m^2}\!\ _1F_2(1;\!m\!+\!1,\!m\!+\!1;\xi^2)\!-\!\tfrac{\omega_2\xi^2}{(m\!+\!1)^2}\ \!_1F_2(1;\!m\!+\!2,\!m\!+\!2;\xi^2)\bigg] \tag{10}
	\end{align}
\end{theorem}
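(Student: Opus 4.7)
The plan is to replace $Q(p(\mu z))$ in \eqref{EX} by the piecewise-linear surrogate $\Omega(\mu z)$ from \eqref{AP}, and then compute the resulting expectation against the PDF $f_Z$ of \eqref{fz} in closed form. Because $\Omega$ equals $1$ on $[0,\zeta^2]$, is affine on $[\zeta^2,\xi^2]$, and vanishes on $[\xi^2,\infty)$, writing the middle branch as $\omega_3-\omega_2 z$ with the shorthands $\omega_1=1$, $\omega_3=\tfrac{1}{2}+\tfrac{\beta\theta}{\sqrt{2\pi}}$ and $\omega_2=\tfrac{\beta\mu}{\sqrt{2\pi}}$ (possibly rescaled by the $2/\Gamma(m)^2$ prefactor of $f_Z$), one immediately obtains
\begin{equation*}
\mathbb{E}[\epsilon]\approx(\omega_1-\omega_3)\!\!\int_0^{\zeta^2}\!\!\!f_Z\,\mathrm{d}z+\omega_3\!\!\int_0^{\xi^2}\!\!\!f_Z\,\mathrm{d}z+\omega_2\!\!\int_0^{\zeta^2}\!\!\!z f_Z\,\mathrm{d}z-\omega_2\!\!\int_0^{\xi^2}\!\!\!z f_Z\,\mathrm{d}z.
\end{equation*}
All that remains is to evaluate the four incomplete moments $\int_0^{a^2}z^{m-1+j}\operatorname{K}_0(2\sqrt{z})\,\mathrm{d}z$ for $j\in\{0,1\}$ and $a\in\{\zeta,\xi\}$.

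For each such integral I would substitute $u=2\sqrt{z}$, recasting it as $\int_0^{2a}u^{2(m+j)-1}\operatorname{K}_0(u)\,\mathrm{d}u$. The antiderivative of $u^{2p-1}\operatorname{K}_0(u)$ admits a representation of the form $\tfrac{u^{2p}}{2p}\operatorname{K}_0(u)\,_1F_2(1;p,p+1;u^2/4)+\tfrac{u^{2p+1}}{(2p)^2}\operatorname{K}_1(u)\,_1F_2(1;p+1,p+1;u^2/4)+C$, which can be derived either by integrating the power-log series expansion of $\operatorname{K}_0$ term by term and recognising the result as the claimed $_1F_2$'s, or via the identity $\tfrac{\mathrm{d}}{\mathrm{d}u}[u^{\mu}\operatorname{K}_1(u)]=-u^{\mu}\operatorname{K}_0(u)+(\mu-1)u^{\mu-1}\operatorname{K}_1(u)$ followed by an analogous reduction for $\int u^{\mu-1}\operatorname{K}_1(u)\,\mathrm{d}u$. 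Setting $p=m$ handles the $j=0$ case and $p=m+1$ the $j=1$ case; evaluating at $u=2\zeta$ and $u=2\xi$ yields the bracketed $\operatorname{K}_0\cdot{}_1F_2+\operatorname{K}_1\cdot{}_1F_2$ combinations displayed in \eqref{AP2}. The lower limit $u=0$ drops out because $u^{\mu}\operatorname{K}_{\nu}(u)\to 0$ as $u\to 0^{+}$ whenever $\mu>\nu\ge 0$.

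Finally, I would regroup the eight resulting terms by endpoint, attaching the weights $(\omega_1-\omega_3)$, $\omega_3$, $+\omega_2$, $-\omega_2$ coming from the first step: the $\zeta$-block aggregates into the first two lines of \eqref{AP2} and the $\xi$-block into the last two. The main obstacle will be the antiderivative identity of step two, namely isolating the correct branch of the $\operatorname{K}_0$/$\operatorname{K}_1$ indefinite integral, fixing the rational prefactors $1/m$, $1/m^2$, $1/(m+1)$, $1/(m+1)^2$, and keeping the four $_1F_2$ parameter tuples $(m,m{+}1)$, $(m{+}1,m{+}1)$, $(m{+}1,m{+}2)$, $(m{+}2,m{+}2)$ straight across both endpoints without sign or index slips; the rest is bookkeeping.
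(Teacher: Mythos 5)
Your proposal is correct and follows essentially the same route as the paper: substitute the piecewise-linear surrogate $\Omega$ and the product-gamma density into \eqref{EX}, reduce everything to integrals of the form $\int z^{p}\operatorname{K}_0(2\sqrt{z})\,\mathrm{d}z$ via $u=2\sqrt{z}$, and use the $\operatorname{K}_0\cdot{}_1F_2+\operatorname{K}_1\cdot{}_1F_2$ antiderivative (your formula matches the paper's Eq.~\eqref{Bessel1} after the change of variables), with the lower limit vanishing by the small-argument behaviour of $z^{p}\operatorname{K}_t(2\sqrt{z})$. The only cosmetic difference is that you rewrite the middle-interval integrals as differences of integrals from $0$ before evaluating, whereas the paper keeps $\int_{\zeta^2}^{\xi^2}$ and regroups at the end; the resulting bookkeeping is identical.
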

\begin{proof}
	See Appendix~\ref{App_A}.
\end{proof}

\section{Numerical Results}\label{results}
In the numerical results we consider scenarios with stringent error probability requirements, which are expected to be typical of URC-S services in future 5G systems. Therefore, being $\epsilon_{_0}$ the target error probability,  $\epsilon\le\epsilon_{_0}$ must be satisfied.
Results are obtained by setting $m=3$, $\alpha=3$ and $d=12$m. We also assume that $\kappa=10^3$, what is equivalent to 30 dB average signal power attenuation at a reference distance of 1 meter. Following the state-of-the-art in circuit design, we consider $\eta=0.5$~\cite{Lu.2015}. Moreover, $P_{_D}=30$dBm and $T_c=3\mu$s, thus $\sigma_{_D}^2=-110$dBm is a valid assumption if a bandwidth around 1MHz is assumed.
\begin{figure}[!t]
	\centering
	\subfigure{\label{Fig2a}\includegraphics[width=0.85\textwidth]{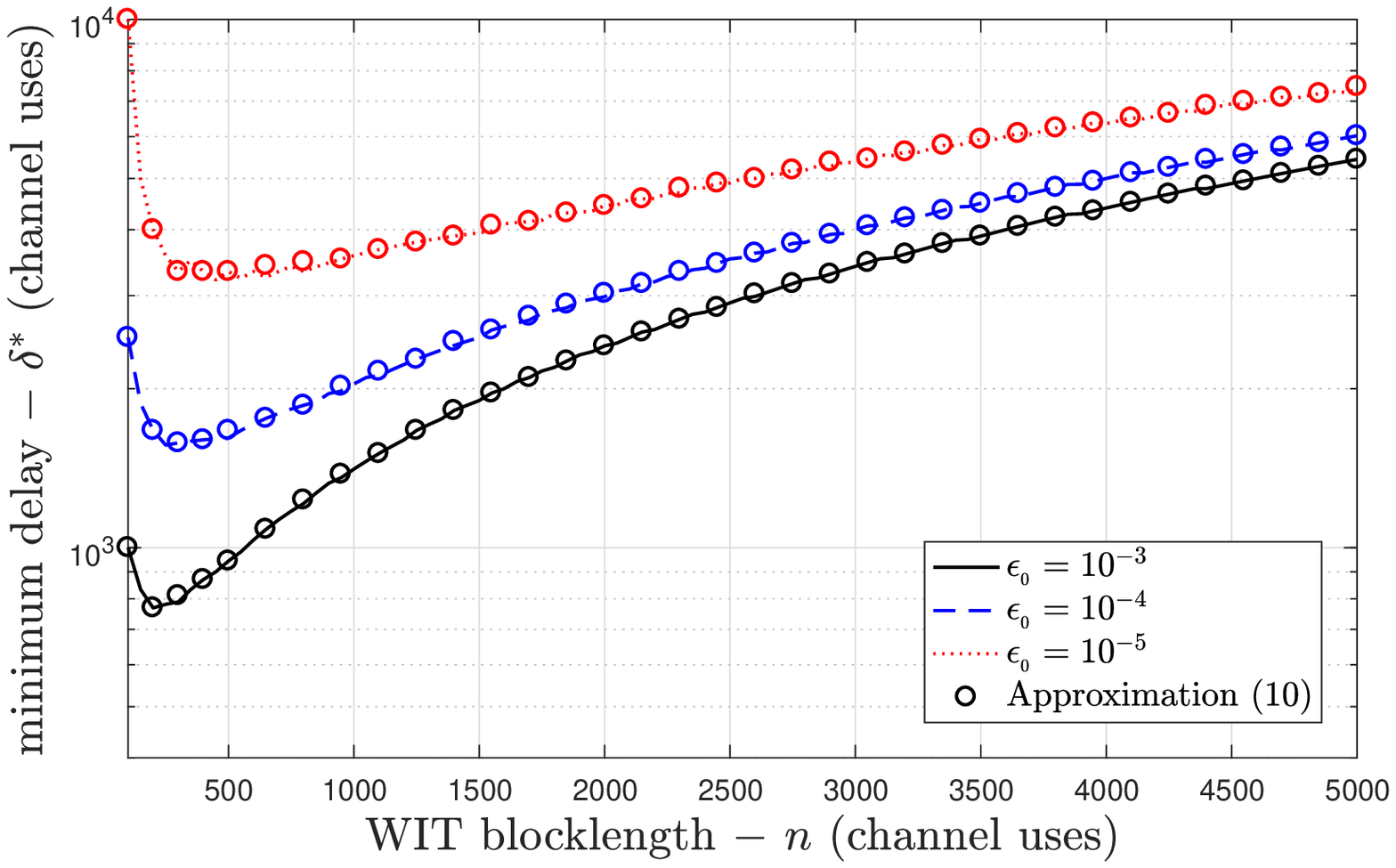}}
	\vspace{-2mm}
	\subfigure{\label{Fig2b}\includegraphics[width=0.85\textwidth]{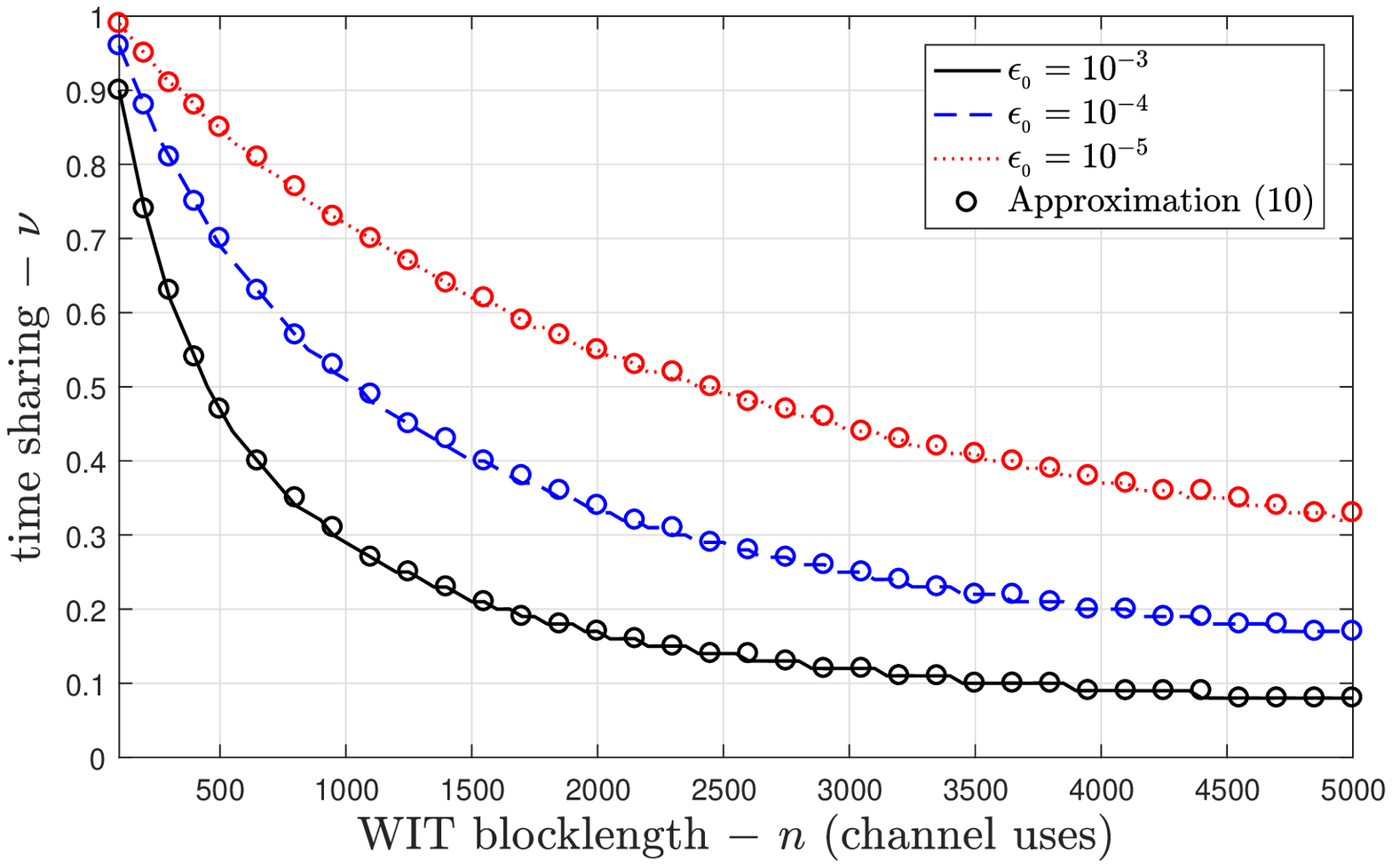}}	
	\caption{(a) $\delta^*$ (top) and (b) $\nu$ (bottom), for packets of $k=216$ bits, as a function of the WIT blocklength $n$.}\label{Fig2}	
\end{figure}
\begin{figure}[!t]
	\centering
	\subfigure{\includegraphics[width=0.85\textwidth]{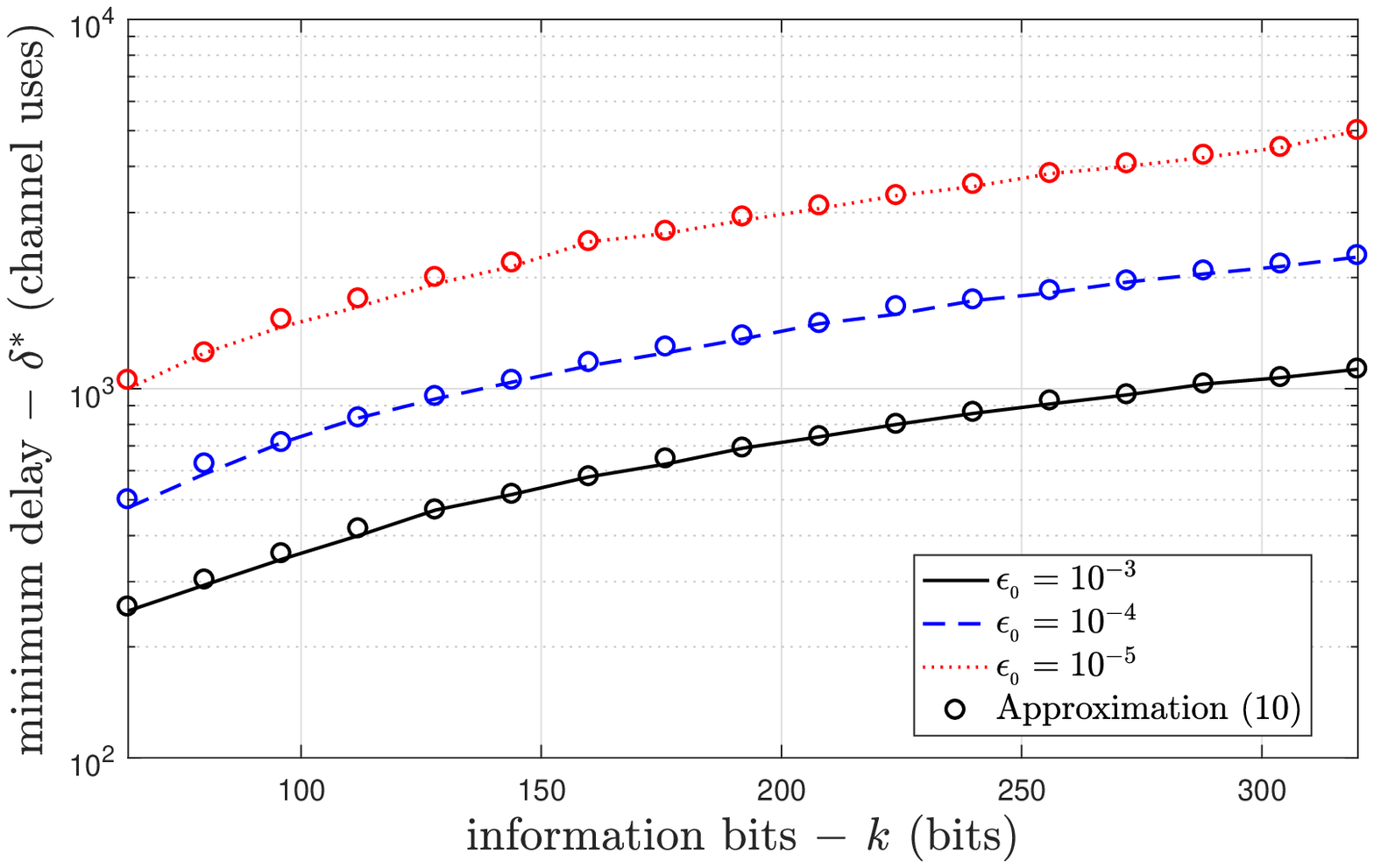}}
	\vspace{-2mm}
	\caption{$\delta^*$ as a function of message length $k$ for different $\epsilon_{_0}$.}		
	\label{Fig3}
\end{figure}

Fig.~\ref{Fig2} shows $\delta^*$, and the corresponding $\nu$, when messages of $k=216$ bits are transmitted with $100\le n\le 5000$ channel uses. Particularly, Fig.~\ref{Fig2}a (top) shows the minimum required delay $\delta^*$, while the corresponding time sharing parameter $\nu$ is plotted in Fig.~\ref{Fig2}b (bottom). We can note that there is an optimum value for the WIT blocklength, $n^*$, which increases as $\epsilon_{_0}$ decreases but always keeps relatively small ($n^*<500$ channel uses) and, at the same time, the optimum WET time increases. As shown in Fig.~\ref{Fig2}b, when $n$ increases, the portion of time devoted for WET decreases and the optimum WET blocklength remains very small for $\epsilon_{_0}=10^{-3}$. When $\epsilon_{_0}=10^{-5}$ (reliability of $99.999\%$), the minimum allowable delay is about $3300$ channel uses $\sim 10$ms, due to the large value of the required WET blocklength, which could be very severe for some URC-S scenarios. Moreover, it is interesting to note that the approximation derived in \eqref{AP2} agrees very well with the numerical integration of \eqref{EX}, which is accurate for $n>100$ for AWGN \cite[Figs.~12 and 13]{Polyanskiy.2010} and fading channels \cite{Yang.2014}.

Retransmission protocols can be very convenient to improve the system performance as investigated in \cite{Makki.2016,Makki.2014} for other scenarios. In that sense, an error probability of $10^{-6}$ can be achieved via $\epsilon_{_0}=10^{-3}$ by allowing for just one packet retransmission and causing a delay much smaller than 3000 channel uses. However, such performance improvement would only be possible if the retransmissions are subject to independent channel realizations (i.e., if the channel coherence time is not larger than the blocklength). In our work we aim at showing the performance in a simple open-loop scenario where the channel coherence time is larger than the blocklength. On the other hand, and as shown in Fig.~\ref{Fig3}, reducing the message length $k$ helps in improving the minimum required delay for a given $\epsilon_{_0}$, which is also in line with previous results for a system without delay restrictions \cite[Fig.~8]{Makki.2016}. 
For instance, if $k=128$ information bits, then $\delta^*\approx 2000$ channel uses $\sim 6$ms, which meets more stringent system requirements. For a given reliability requirement, and based on \eqref{EX}, when $k$ increases $n$ should increase as well in order to diminish the rate $r$, hence $\gamma$ and $C(\gamma)$ tend to decrease. Also, increasing $v$ slows the capacity decrease and the required $n$ does not need to be so large, thus a trade-off between increasing $v$ and $n$ is identified. Nevertheless, increasing $k$ with a fixed $\epsilon_{_0}$ leads to an inevitable increase in the overall number of channel uses $n+v$, which is also clearly inferred from Fig.~\ref{Fig3}.
\begin{figure}[t!]
	\centering
	\subfigure{\label{Fig4a}\includegraphics[width=0.85\textwidth]{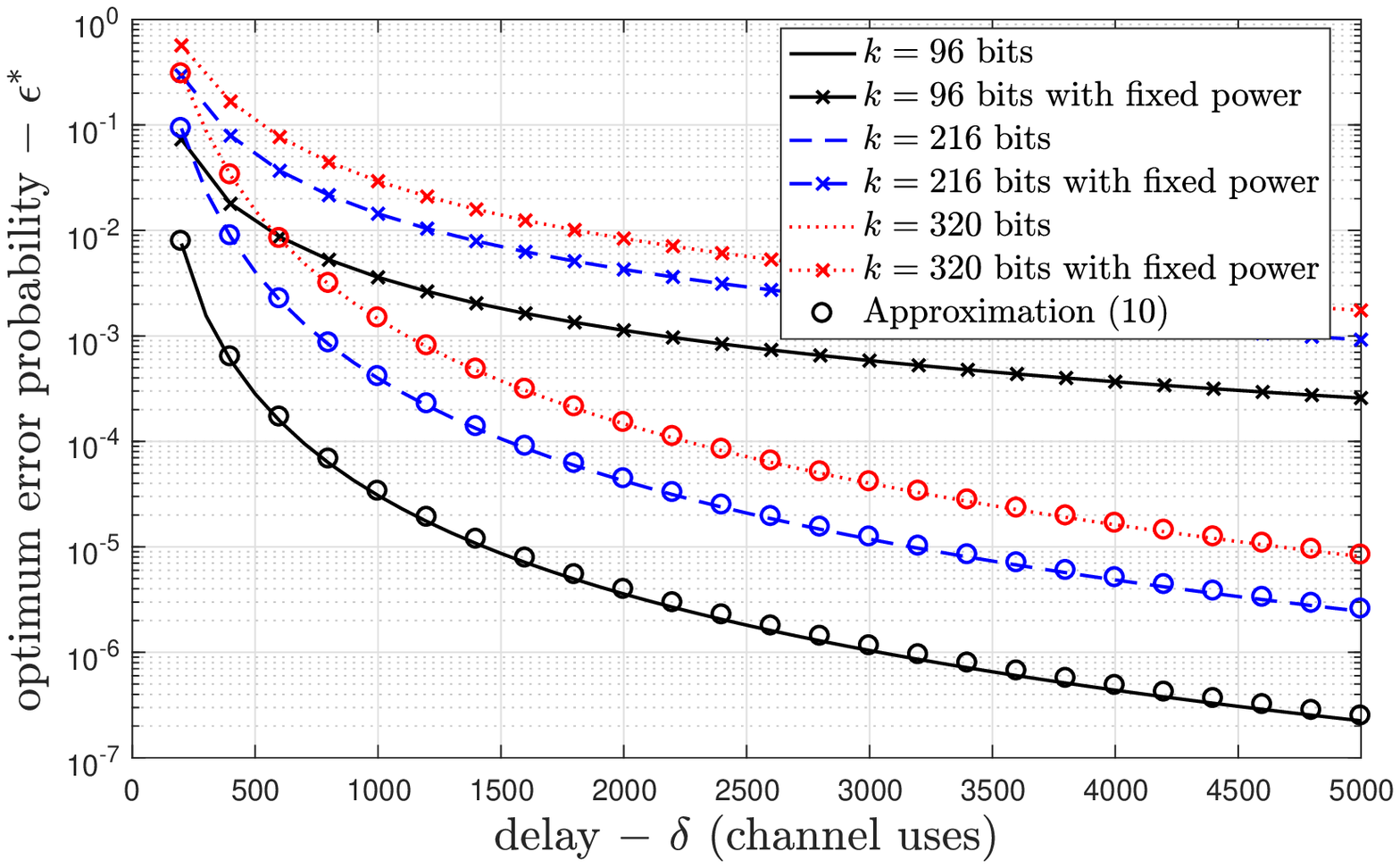}}
	\vspace{-4mm}
	\subfigure{\label{Fig4b}\includegraphics[width=0.85\textwidth]{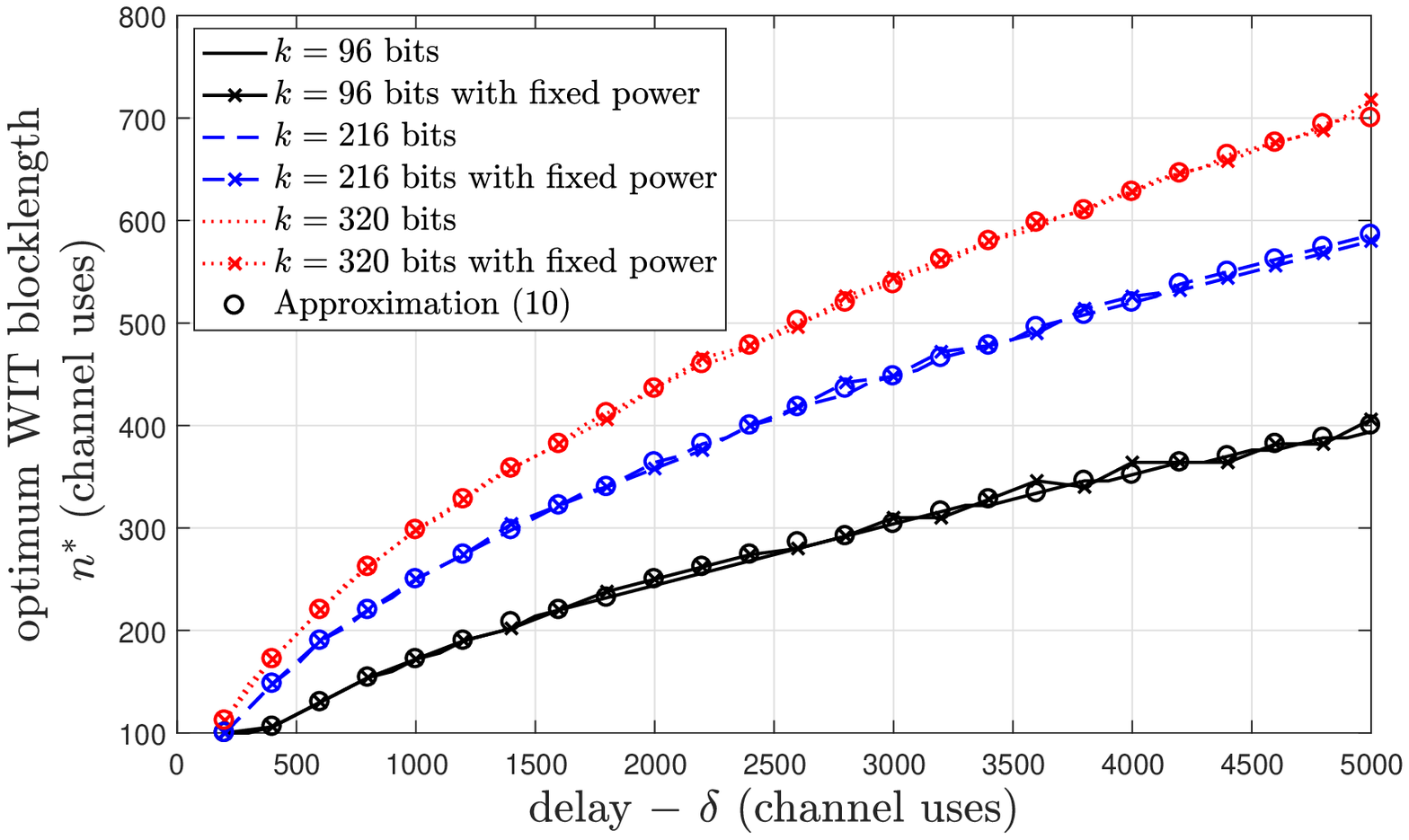}}	
	\caption{(a) $\epsilon^*$ (top) and (b) $n^*$ (bottom), for packets of $k=96$, $k=216$ and $k=320$ bits, as a function of the delay.}\label{Fig4}
	\vspace{-2mm}		
\end{figure}

Finally, in Fig.~\ref{Fig4}a  we evaluate the minimum achievable error probability $\epsilon$ for different delays $\delta$, and we also compare the system performance with that of a scenario where node $S$ transmits with the optimum fixed power $\hat{P}_{_S}^*$. Notice that in such scenario, an outage can also occur when the harvested energy is insufficient for $S$ to transmit with power $\hat{P}_{_S}^*$. In addition, we do not consider energy accumulation from round to round.  As expected, the performance of the fixed transmit power scheme is much worse than that with power allocation.
Fig.~\ref{Fig4}b shows the optimum block length for WIT, $n^*$, which increases slowly with $\delta$, a consequence of the fact that increasing the allowed delay requires increasing more the fraction of time for WET for obtaining a smaller error probability. This means that $v$ has a greater impact on the error probability when increasing the allowable delay than the WIT blocklength $n$ on its own. Also, $n^*$ is practically the same for both scenarios.

The analysis presented in this paper was also carried out using the classical information theoretical tools that assume an infinite blocklength. The differences in the results when considering finite or infinite blocklenghts are small, but increase with the decrease of $k$ or with the increase of $m$, which are inline with the results reported in \cite{Mary.2015} for a scenario without WET, and are omitted here for the sake of brevity. Therefore, we can conclude that the loss in performance due to the use of short packets is very small in our particular scenario.
\section{Conclusion}\label{conclusions}
We evaluated a communication system with WET in the downlink and WIT in the uplink, in URC-S scenarios. We provided an analytical approximation for the error probability and validated its accuracy. The numerical results demonstrated that there are optimum values for the WIT and WET blocklengths. By increasing the WIT blocklength, the required WET blocklength decreases. In addition, we show that the more stringent the reliability, the higher the minimum required delay, which also increases the fraction of time devoted for WET. Finally, we show that depending on the particular requirement, WET in URC-S scenarios may be feasible.
\appendices 
\section{Proof of Theorem~\ref{prop_1}}\label{App_A}
First, let us substitute \eqref{fz} and \eqref{AP} into \eqref{EX}, which yields 
\begin{align}
\mathbb{E}[\epsilon]&\approx \frac{2}{\Gamma(m)^2}\int_{0}^{\infty}z^{m-1}\operatorname{K}_0(2\sqrt{z})\Omega(\mu z)\mathrm{d}z 
\nonumber\\
&\approx\! \omega_1\!\int_{0}^{\zeta^2}\!z^{m\!-\!1}\!\operatorname{K}_0(2\sqrt{z})\mathrm{d}z\!\nonumber\!-\!\omega_2\!\int_{\zeta^2}^{\xi^2}\!z^m\operatorname{K}_0(2\sqrt{z})\mathrm{d}z+\omega_3\int_{\zeta^2}^{\xi^2}z^{m-1}\!\operatorname{K}_0(2\sqrt{z})\mathrm{d}z, \tag{11} \label{AP1}
\end{align}
where $\omega_1=\frac{2}{\Gamma(m)^2}$, $\omega_2=\frac{\beta\mu}{\sqrt{2\pi}}\omega_1$ and $\omega_3=\Big(\frac{1}{2}+\frac{\beta\theta}{\sqrt{2\pi}}\Big)\omega_1$.
In order to continue with the derivations we need first to compute $\int z^p\operatorname{K}_0(2\sqrt{z})\mathrm{d}z$, which is given next. 
\begin{align}
\int z^p\operatorname{K}_0(2&\sqrt{z})\mathrm{d}z\stackrel{(a)}{=}\int\Big(\frac{q}{2}\Big)^{2p+1}\operatorname{K}_0(q)\mathrm{d}q\nonumber\\
&\stackrel{(b)}{=}\Big(\frac{q}{2}\Big)^{2p+2}\Gamma(p\!+\!1)^2\Bigg[\frac{\operatorname{K}_0(q)\ _1F_2\big(1;p\!+\!1,p\!+\!2;\tfrac{q^2}{4}\big)}{\Gamma(p+1)\Gamma(p+2)}+\frac{\frac{q}{2}\operatorname{K}_1(q)\ _1F_2\big(1;p\!+\!2,p\!+\!2;\tfrac{q^2}{4}\big)}{\Gamma(p\!+\!2)^2}\Bigg]\nonumber\\
&\stackrel{(c)}{=}z^{p+1}\frac{\operatorname{K}_0(2\sqrt{z})\ _1F_2(1;p\!+\!1,p\!+\!2;z)}{p+1}+z^{p+\tfrac{3}{2}}\frac{\operatorname{K}_1(2\sqrt{z})\ _1F_2(1;p\!+\!2,p\!+\!2;z)}{(p+1)^2},\tag{12}\label{Bessel1}
\end{align}
where in $(a)$ we substitute $q=2\sqrt{z}$, $(b)$ comes after some algebraic manipulations of the results given in \cite[Sec. 1.1.2]{Rosenheinrich.2012} along with the definition of $_xF_y$. Finally,  in $(c)$ we return to variable $z$, and substituting \eqref{Bessel1} into \eqref{AP1} we attain 
\eqref{AP2} given that $\lim\limits_{z\rightarrow 0}z^p\operatorname{K}_t(2\sqrt{z})=0,\ p>\tfrac{1}{2},\ t\in\{0,1\}$ (which comes from its series expansion at $z=0$), which concludes the proof.


\bibliographystyle{IEEEtran}
\bibliography{IEEEabrv,references}
\end{document}